\documentclass[11pt]{article}
\usepackage[margin=1.2in]{geometry}
\usepackage{amsmath,amssymb,amsthm}
\usepackage[T1]{fontenc}
\usepackage[colorlinks=true,linkcolor=blue,citecolor=blue]{hyperref}

\newtheorem{theorem}{Theorem}[section]
\newtheorem{proposition}[theorem]{Proposition}
\newtheorem{lemma}[theorem]{Lemma}
\newtheorem{corollary}[theorem]{Corollary}

\theoremstyle{definition}
\newtheorem{definition}[theorem]{Definition}
\newtheorem{remark}[theorem]{Remark}

\newcommand{\qc}{\operatorname{qc}}
\newcommand{\qcp}{\operatorname{qc}_{\mathrm{pure}}}
\newcommand{\rc}{\operatorname{rc}}
\newcommand{\tr}{\operatorname{tr}}

\newcommand{\Hermz}{\mathrm{Herm}_0}
\newcommand{\R}{\mathbb{R}}
\newcommand{\C}{\mathbb{C}}

\newcommand{\ketbra}[2]{|#1\rangle\langle #2|}
\newcommand{\ket}[1]{|#1\rangle}
\newcommand{\bra}[1]{\langle #1|}

\title{Quantum \v{C}ern\'y complexity of binary words}
\author{Pui Hang Lee \and Pui Yee Lee \and Bj{\o}rn Kjos-Hanssen}

\begin{document}
\maketitle

\begin{abstract}
We introduce the \emph{quantum \v{C}ern\'y complexity} $\qc(w)$ of a binary word $w$:
the least dimension $d$ for which there exist quantum channels $A_0,A_1$ on
$d\times d$ density matrices and a start state $\rho_0$ such that $w$ is the unique
shortest word whose associated channel is constant on the reachable set.
We show that $2\le\qc(w)\le\lceil\sqrt{|w|+1}\,\rceil$ for every nonempty $w$,
a quadratic saving over the classical analogue, and that constant words are extremal:
$\qc(0^m)=\lceil\sqrt{m+1}\,\rceil$.
In contrast, $\qc(01^n0)=2$ for every $n\ge 1$, realized by a single qubit whose
rotation angle acts as a counter; consequently there is no quantum analogue of the
\v{C}ern\'y function, and $\qc$ is strongly anti-correlated with intuitive notions of
descriptive complexity.
We further study the variant $\qcp$ in which the synchronization target is required
to be a pure state. We prove that in dimension $2$ no word of length at least $2$
can be a unique shortest synchronizing word with pure target, and we exhibit an
explicit qutrit instance, combining a coherent rotation with a measure-and-funnel
channel, achieving $\qcp(01^n0)=3$ with target a computational basis state and with
synchronization holding universally over all input states.
Thus purity of the reset state costs exactly one dimension on this family.
We also observe that $\qc$ is computable, by reduction to the first-order theory
of the reals.
\end{abstract}

\section{Introduction}

A deterministic finite automaton is \emph{synchronizing} if some input word sends
every state to one and the same state; the \v{C}ern\'y conjecture
\cite{Cerny1964,Volkov2008} asserts that an $n$-state synchronizing automaton
always admits a synchronizing (reset) word of length at most $(n-1)^2$, with the
best known general upper bounds being cubic in $n$ \cite{Pin1983}.
One may dualize the question: given a word $w$, define its \emph{reset complexity}
$\rc(w)$ as the least number of states of an automaton for which $w$ is the unique
shortest synchronizing word. This is a machine-based description size for $w$,
in the spirit of automatic complexity.

In this note we study a quantum analogue. States become density matrices, letters
become quantum channels (completely positive trace-preserving maps), and
``sending every state to one state'' becomes ``the channel of the word $w$ is
constant on the set of reachable states.''

\begin{definition}\label{def:qc}
Let $d\ge 1$. An \emph{instance} is a tuple $(d,A_0,A_1,\rho_0)$ where
$A_0,A_1$ are quantum channels on the $d\times d$ density matrices and $\rho_0$
is a $d\times d$ density matrix.
For a word $u=u_1\cdots u_k\in\{0,1\}^*$ write
$A_u=A_{u_k}\circ\cdots\circ A_{u_1}$ (letters applied left to right),
with $A_\varepsilon=\mathrm{id}$.
The \emph{reachable set} is $R=\{A_u(\rho_0):u\in\{0,1\}^*\}$.
A word $w$ is \emph{synchronizing} for the instance if $A_w$ is constant on $R$,
i.e.\ there is a density matrix $\rho_1$ with $A_w(\rho)=\rho_1$ for all
$\rho\in R$.
The \emph{quantum \v{C}ern\'y complexity} $\qc(w)$ of a nonempty word
$w\in\{0,1\}^*$ is the least $d$ such that some instance of dimension $d$ has $w$
as its \emph{unique shortest} synchronizing word.
The variant $\qcp(w)$ is defined identically except that the target $\rho_1$
is additionally required to be a pure state.
\end{definition}

Note that if $w$ is a unique shortest synchronizing word then the empty word does
not synchronize, so $|R|\ge 2$; in particular $d=1$ is impossible for every
nonempty $w$, and $\qc(w)\ge 2$ always. Clearly $\qc(w)\le\qcp(w)$.

Our results are summarized as follows. Constant words are as hard as possible
for their length ($\qc(0^m)=\lceil\sqrt{m+1}\,\rceil$, matching the general upper
bound of Theorem~\ref{thm:kmp}), while the words $01^n0$ have $\qc(01^n0)=2$
uniformly in $n$: a qubit rotation angle can \emph{count} the run of $1$'s at no
dimensional cost. So $\qc$ behaves opposite to Kolmogorov-style intuition ---
the ``simplest'' word is the most expensive one --- and there is no quantum
\v{C}ern\'y function: in fixed dimension $d\ge2$, unique shortest synchronizing
words of unbounded length exist (Corollary~\ref{cor:nocerny}).
The pure-target variant restores a separation: $\qcp(01^n0)=3$, witnessed by an
explicit qutrit construction whose target is a computational basis state
(Theorem~\ref{thm:qutrit}) and whose optimality follows from a rigidity theorem
for qubit channels (Theorem~\ref{thm:d2pure}).

\section{Linearization and the general upper bound}

Channels are affine on density matrices, which linearizes the constancy
condition. Let $\Hermz(\C^d)$ denote the traceless Hermitian $d\times d$
matrices, a real vector space of dimension $d^2-1$.

\begin{proposition}[Linearization]\label{prop:lin}
Let $(d,A_0,A_1,\rho_0)$ be an instance, let $L_i$ denote the (real-linear)
action of $A_i$ on $\Hermz$, and let
$V=\operatorname{span}_\R(R-R)\subseteq\Hermz(\C^d)$.
Then $V$ is invariant under $L_0$ and $L_1$, and a word $u$ is synchronizing if
and only if $L_u:=L_{u_k}\cdots L_{u_1}$ vanishes on $V$.
Moreover $V=\operatorname{span}\{A_u(\rho_0)-A_v(\rho_0):|u|,|v|\le d^2-1\}$,
so $V$ is computed by finitely many words.
\end{proposition}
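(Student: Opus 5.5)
The plan is to use affineness of channels throughout. A channel $A_i$ is linear on all Hermitian matrices and trace-preserving, so it maps $\Hermz$ into itself. For density matrices $\rho,\sigma$ we then have $A_i(\rho)-A_i(\sigma)=L_i(\rho-\sigma)$.

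\emph{Invariance.} A generator of $V$ has the form $\rho-\sigma$ with $\rho=A_u(\rho_0)$ and $\sigma=A_v(\rho_0)$ in $R$. Its image is
\[
L_i(\rho-\sigma)=A_{ui}(\rho_0)-A_{vi}(\rho_0),
\]
which again lies in $R-R$. By linearity of $L_i$, $L_i(V)\subseteq V$. Iterating, $A_u(\rho)-A_u(\sigma)=L_u(\rho-\sigma)$ for every word $u$.

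\emph{Characterization of synchronizing words.} Suppose $u$ is synchronizing. Then $A_u(\rho)=A_u(\sigma)$ for all $\rho,\sigma\in R$, so $L_u$ kills every element of $R-R$ and hence, by linearity, all of $V$. Conversely, suppose $L_u$ vanishes on $V$. Then $A_u(\rho)-A_u(\rho_0)=L_u(\rho-\rho_0)=0$ for every $\rho\in R$. So $A_u$ is constant on $R$, with value $\rho_1=A_u(\rho_0)$, which is a density matrix because $A_u$ is a channel. Since $\rho_0\in R$, we also have $V=\operatorname{span}\{\rho-\rho_0:\rho\in R\}$.

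\emph{Finite generation.} Let $W_k=\operatorname{span}\{A_u(\rho_0)-\rho_0:|u|\le k\}$. This is an increasing chain of subspaces of $\Hermz$, which has dimension $d^2-1$. The key observation is
\[
W_{k+1}=W_k+\sum_i\bigl(L_iW_k+\R\,(A_i(\rho_0)-\rho_0)\bigr),
\]
which follows from the identity $A_{ui}(\rho_0)-\rho_0=L_i(A_u(\rho_0)-\rho_0)+(A_i(\rho_0)-\rho_0)$. Hence $W_{k+1}$ depends only on $W_k$, and once $W_k=W_{k+1}$ the chain is constant from then on. The dimension is $0$ at $k=0$ and can grow at most $d^2-1$ times, so it stabilizes by $k=d^2-1$. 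Therefore $V=\bigcup_k W_k=W_{d^2-1}$, and $W_{d^2-1}$ is contained in the span of the differences with $|u|,|v|\le d^2-1$, itself contained in $V$, giving the stated description. The only mildly delicate point is checking that $W_{k+1}$ is determined by $W_k$ alone (the displayed recursion), so that stagnation is permanent; the rest is routine linear algebra.
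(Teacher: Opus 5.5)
Your proof is correct and follows essentially the same route as the paper: the affine identity $A_i(\rho)-A_i(\sigma)=L_i(\rho-\sigma)$ gives invariance and the characterization, and an ascending chain of spans of reachable differences, driven by $L_0$ and $L_1$, stabilizes within $d^2-1$ steps. Your chain $W_k$, with differences taken against $\rho_0$ and the extra term $\R\,(A_i(\rho_0)-\rho_0)$, is in fact a bit more careful than the paper's recursion $V_{k+1}=V_k+L_0V_k+L_1V_k$. That recursion fails at $k=0$, where $V_0=0$ but $V_1$ need not vanish, and is only valid for $k\ge1$.
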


\begin{proof}
For $\rho,\rho'\in R$ we have $A_i(\rho)-A_i(\rho')=L_i(\rho-\rho')$, and
$A_i(R)\subseteq R$, giving invariance; the same identity applied to $A_u$ shows
$A_u$ is constant on $R$ iff $L_u$ kills every difference of reachable states,
i.e.\ vanishes on $V$. For the last claim, let $V_k$ be the span of differences
of states reachable in at most $k$ steps; then
$V_{k+1}=V_k+L_0V_k+L_1V_k$, so $(V_k)_k$ is a nondecreasing chain of subspaces
of the $(d^2-1)$-dimensional space $V$, which stabilizes as soon as it fails to
grow, hence within $d^2-1$ steps, and its limit is $V$.
\end{proof}

Thus $\qc$ is at heart a \emph{matrix mortality} quantity: $w$ must be the unique
shortest word whose matrix product vanishes on an invariant subspace of dimension
at most $d^2-1$. Two consequences follow. First, any pair of real linear maps
that is small in norm can be realized inside channels, by perturbing the
completely depolarizing channel; this is the engine of all our upper bounds.
Second, since mortality only concerns finitely many words of length less than
$|w|$, together with the finite description of $V$ above we obtain:

\begin{remark}[Computability]\label{rem:computable}
For each fixed $d$ and $w$, the statement ``some instance of dimension $d$ has
$w$ as unique shortest synchronizing word'' is expressible in the first-order
theory of the reals: complete positivity is positive semidefiniteness of the
Choi matrix, $V$ is definable by Proposition~\ref{prop:lin}, and one quantifies
over the finitely many words of length at most $|w|$. By Tarski's decision
procedure, $\qc$ (and likewise $\qcp$) is a computable function of $w$.
This contrasts with the undecidability of general matrix mortality over the
integers \cite{Paterson1970}.
\end{remark}

The next lemma packages the perturbative construction.

\begin{lemma}[Depolarizing perturbation]\label{lem:depol}
Let $L$ be any real-linear map on $\Hermz(\C^d)$. For all sufficiently small
$\varepsilon>0$, the linear map
\[
\Phi(X) \;=\; \tr(X)\,\tfrac{I}{d} \;+\; \varepsilon\, L\!\left(X-\tr(X)\tfrac{I}{d}\right)
\]
is a quantum channel, and $\Phi\big(\tfrac{I}{d}+cx\big)=\tfrac{I}{d}+\varepsilon c\,L(x)$
for every traceless Hermitian $x$ and scalar $c$.
\end{lemma}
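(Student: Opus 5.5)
We need to show three things: $\Phi$ is trace preserving, $\Phi$ is completely positive for small $\varepsilon$, and $\Phi$ acts on $\tfrac{I}{d}+cx$ by the stated formula. First, extend $L$ complex-linearly to all of $M_d(\C)$. Write $P(X)=X-\tr(X)\tfrac{I}{d}$; it maps Hermitian matrices onto $\Hermz$, and hence all of $M_d(\C)$ onto the traceless matrices. So $\Phi$ is a well-defined linear map on $M_d(\C)$ that preserves Hermiticity. Trace preservation comes next, since $\tr L(\cdot)=0$ ($L$ takes values in $\Hermz$) and $\tr(\tr(X)I/d)=\tr X$. The action formula is immediate: $\tr(\tfrac{I}{d}+cx)=1$ and $P(\tfrac{I}{d}+cx)=cx$, so $\Phi(\tfrac{I}{d}+cx)=\tfrac{I}{d}+\varepsilon cL(x)$ by real-linearity of $L$.

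The substantive step is complete positivity, which I will verify through the Choi matrix
\[
J(\Phi)=\sum_{i,j}E_{ij}\otimes\Phi(E_{ij})=\tfrac{1}{d}\,I\otimes I+\varepsilon\,J(L\circ P).
\]
The identity term is the Choi matrix of the completely depolarizing channel $X\mapsto\tr(X)I/d$, since $\sum_{i}E_{ii}\otimes I/d=I\otimes I/d$. The second term $K:=J(L\circ P)$ is a fixed matrix independent of $\varepsilon$. It is Hermitian because $L\circ P$ is Hermiticity-preserving: for such maps $\Phi(E_{ij})^*=\Phi(E_{ji})$, which makes the Choi matrix Hermitian.

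Hence $J(\Phi)\succeq(\tfrac1d-\varepsilon\|K\|)I$, and $J(\Phi)$ is positive semidefinite, indeed positive definite, for all $0<\varepsilon\le 1/(d\|K\|)$. By Choi's theorem $\Phi$ is then completely positive, and together with trace preservation it is a quantum channel.

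The only point needing care is the Hermiticity of $K$, since the extension of a real-linear map on $\Hermz$ to a complex-linear map must preserve adjoints. This holds by construction: define $\tilde L(H_1+iH_2)=L'(H_1)+iL'(H_2)$ for Hermitian $H_1,H_2$, where $L'=L\circ P$ on Hermitian matrices. Then $\tilde L(Y^*)=\tilde L(Y)^*$, and the argument above goes through, giving an explicit threshold $\varepsilon_0=1/(d\|K\|)$ (or any $\varepsilon$ if $K=0$).
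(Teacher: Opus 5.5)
Your proposal is correct and follows the paper's argument. The paper also writes the Choi matrix as a full-rank multiple of the identity plus $\varepsilon$ times a fixed Hermitian matrix, so it is positive semidefinite for small $\varepsilon$; you use the unnormalized $\tfrac1d I\otimes I$ where the paper uses the normalized $I/d^2$. Your explicit complexification of $L\circ P$ and the threshold $\varepsilon_0=1/(d\|K\|)$ fill in details the paper leaves implicit, though at $\varepsilon=\varepsilon_0$ itself you only get semidefiniteness rather than definiteness, which is all you need.
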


\begin{proof}
$\Phi$ is trace-preserving and Hermiticity-preserving by construction, and its
Choi matrix is $\tfrac{I}{d^2}+\varepsilon C_L$ for a fixed Hermitian $C_L$;
since $\tfrac{I}{d^2}$ has full rank, the Choi matrix is positive semidefinite
for small $\varepsilon$, i.e.\ $\Phi$ is completely positive. The displayed
identity is immediate.
\end{proof}

\begin{theorem}[General bounds]\label{thm:kmp}
For every nonempty $w\in\{0,1\}^*$,
\[
2 \;\le\; \qc(w) \;\le\; \big\lceil\sqrt{|w|+1}\,\big\rceil .
\]
\end{theorem}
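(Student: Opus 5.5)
The lower bound $\qc(w)\ge 2$ is the observation made after Definition~\ref{def:qc}: a unique shortest synchronizing word forces $|R|\ge 2$, which is impossible when $d=1$. For the upper bound, set $m=|w|$ and $d=\lceil\sqrt{m+1}\,\rceil$, so that $\dim\Hermz(\C^d)=d^2-1\ge m$. My plan is to build two real-linear maps $L_0,L_1$ on an $m$-dimensional subspace for which $w$ is the unique shortest mortal word. I would then realize them as channels using Lemma~\ref{lem:depol}, and check synchronization through Proposition~\ref{prop:lin}.

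\textbf{The construction.} Fix linearly independent $e_1,\dots,e_m\in\Hermz(\C^d)$ and set $e_{m+1}:=0$. Mimicking the classical ``wrong letter loops'' automaton, define for $a\in\{0,1\}$
\[
L_a e_i=\begin{cases} e_{i+1} & \text{if } w_i=a,\\ e_i & \text{if } w_i\ne a,\end{cases}
\]
and let $L_a$ vanish on a complement of $\operatorname{span}\{e_i\}$. Let $A_a$ be the channel of Lemma~\ref{lem:depol} built from $L_a$, with one common small $\varepsilon$, and put $\rho_0=\tfrac{I}{d}+c\,e_1$ with $c>0$ small enough that $\rho_0$ is a density matrix. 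Then $A_u(\rho_0)=\tfrac{I}{d}+c\,\varepsilon^{|u|}L_u e_1$.

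\textbf{Combinatorial step.} For any word $u$, $L_u e_j$ equals $e_{k}$, where $k-1$ is the position in $w$ reached by greedily matching $w_j w_{j+1}\cdots$ as a subsequence of $u$; it becomes $0$ exactly when all of $w_j\cdots w_m$ has been matched. Hence $L_u e_1=0$ iff $w$ is a subsequence of $u$. For $|u|\le m$ this happens iff $u=w$. Moreover, if $w$ embeds greedily in $u$, then so does each suffix $w_j\cdots w_m$, matched no later. So $L_w$ kills every $e_j$, while no other word of length $\le m$ kills even $e_1$.

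\textbf{Computing $V$.} The prefixes $w_1\cdots w_k$ with $k<m$ give reachable states whose differences with $\rho_0$ are $c(e_1-\varepsilon^k e_{k+1})$. A one-letter word $a\ne w_1$ gives the difference $c(1-\varepsilon)e_1$. Hence $V=\operatorname{span}\{e_1,\dots,e_m\}$, as every reachable point lies in $\tfrac{I}{d}+\operatorname{span}\{e_i\}$. By Proposition~\ref{prop:lin}, $u$ is synchronizing iff $L_u$ vanishes on $V$. By the combinatorial step this holds for $u=w$ and fails for every other word of length $\le m$, so $w$ is the unique shortest synchronizing word.

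\textbf{Main obstacle.} The delicate point is the design of $L_0,L_1$. The naive nilpotent shift, sending mismatched basis vectors to $0$, makes every length-$m$ word mortal and destroys uniqueness. The looping rule above avoids this, and the remaining care is in verifying that $V$ is all of $\operatorname{span}\{e_i\}$ rather than a proper subspace. The edge case $w_1$ constant is handled by the same argument, since a mismatching first letter always exists in a binary alphabet.
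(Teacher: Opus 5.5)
Your proof is correct, and it differs from the paper's in which automaton it embeds; the framework is the same. Both arguments put an $(m+1)$-state automaton with an absorbing sink into $\Hermz(\C^d)$, with the sink at the origin and the other states as linearly independent traceless Hermitians, and then realize the resulting maps as channels via Lemma~\ref{lem:depol}.

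\textbf{Different automaton.} The paper uses the Knuth--Morris--Pratt automaton of $w$, so its synchronizing words are exactly the words containing $w$ as a \emph{factor}. You use the greedy subsequence-matching automaton, in which mismatched letters loop, so your synchronizing words are exactly the words containing $w$ as a \emph{scattered subword}. In both cases the shortest such word is uniquely $w$, for length reasons.

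\textbf{What your choice buys.} No failure function is needed, and each $L_a$ is simply a partial shift plus a partial identity. The fact that if $w$ embeds in $v$ then every suffix of $w$ embeds (monotonicity of the greedy state) is immediate. The paper instead relies on the defining property of the KMP automaton and on reachability of all its states from $0$.

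\textbf{Different verification.} You compute $V=\operatorname{span}\{e_1,\dots,e_m\}$ explicitly and apply Proposition~\ref{prop:lin}. The paper argues directly on reachable states: it compares against the branch $u=w$, which lands at $\tfrac{I}{d}$, and concludes that $\delta_{uv}(0)=n$ for every $u$. Both are sound.

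\textbf{Two harmless details.} The action of $A_a$ on $\Hermz$ is $\varepsilon L_a$, not $L_a$. Since $(\varepsilon L)_u=\varepsilon^{|u|}L_u$, vanishing on $V$ is unaffected. Separately, $e_1\in V$ also follows directly from $A_w(\rho_0)=\tfrac{I}{d}$, so the detour through a mismatching first letter is not needed.
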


\begin{proof}
The lower bound was noted after Definition~\ref{def:qc}. For the upper bound,
let $n=|w|$ and let $d=\lceil\sqrt{n+1}\,\rceil$, so $d^2-1\ge n$.
Consider the Knuth--Morris--Pratt automaton of $w$: states $Q=\{0,1,\dots,n\}$,
where state $q<n$ represents the longest suffix of the input read so far that is
a prefix of $w$, with transitions $\delta_a(q)$ for $a\in\{0,1\}$ defined
accordingly, and $n$ an absorbing sink entered once $w$ has occurred. The
defining property is: for all $q\in Q$ and $v\in\{0,1\}^*$,
$\delta_v(q)=n$ iff $(w_1\cdots w_q)v$ contains $w$ as a factor. In particular,
$\delta_v(q)=n$ \emph{for all} $q\in Q$ iff $v$ itself contains $w$ (take $q=0$
for the forward direction; conversely an occurrence of $w$ inside $v$ drives any
state into the sink).

Choose linearly independent traceless Hermitians $g_0,\dots,g_{n-1}$ (possible
since $n\le d^2-1$) and set $g_n:=0$; the sink sits at the origin. Define
$L_a$ on $\Hermz$ by $L_a g_q=g_{\delta_a(q)}$ for $q<n$ and $L_a=0$ on a
complement of $\operatorname{span}\{g_0,\dots,g_{n-1}\}$; then $L_v g_q =
g_{\delta_v(q)}$ for all words $v$. Let $A_a$ be the channel of
Lemma~\ref{lem:depol} for $L_a$, and let
$\rho_0=\tfrac{I}{d}+\varepsilon g_0$ (a state for small $\varepsilon$).
By the lemma, $A_u(\rho_0)=\tfrac{I}{d}+\varepsilon^{|u|+1}g_{\delta_u(0)}$ for
every $u$.

Now fix $v$ and ask when $v$ synchronizes. The states
$A_v(A_u(\rho_0))=\tfrac{I}{d}+\varepsilon^{|u|+|v|+1}g_{\delta_{uv}(0)}$ must
agree for all $u$. Taking $u=w$ gives the value $\tfrac{I}{d}$ (sink, $g_n=0$),
so constancy forces $g_{\delta_{uv}(0)}=0$, i.e.\ $\delta_{uv}(0)=n$, for
\emph{every} $u$. Since every classical state is reachable from $0$ (read
prefixes of $w$), this says $\delta_v(q)=n$ for all $q\in Q$, i.e.\ $v$ contains
$w$ as a factor; and conversely any such $v$ synchronizes, with target
$\rho_1=\tfrac{I}{d}$. The shortest words containing $w$ have length $n$, and
the unique one is $w$ itself.
\end{proof}

\begin{remark}\label{rem:classical}
The same automaton realized \emph{classically} --- with deterministic channels
$\rho\mapsto\sum_q \bra{q}\rho\ket{q}\,\ketbra{\delta_a(q)}{\delta_a(q)}$ on
$\C^{n+1}$ and $\rho_0=\ketbra{0}{0}$ --- shows
$\qcp(w)\le |w|+1$ for every $w$, with the pure target
$\ketbra{n}{n}$. Thus both quantities are finite, and the quantum saving in
Theorem~\ref{thm:kmp} is quadratic: coherences make available a difference
space of dimension $d^2-1$ rather than $d-1$.
\end{remark}

\section{Constant words are extremal}

\begin{theorem}\label{thm:zeros}
For every $m\ge 1$, $\qc(0^m)=\lceil\sqrt{m+1}\,\rceil$. In particular the
constant word attains the general upper bound of Theorem~\ref{thm:kmp}: among
words of its length it is the most complex.
\end{theorem}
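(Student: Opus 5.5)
The upper bound is immediate from Theorem~\ref{thm:kmp}, so the work is the lower bound: if an instance of dimension $d$ has $0^m$ as its unique shortest synchronizing word, then $m\le d^2-1$. The plan is to reduce this to a nilpotency-index bound for a single linear map on the invariant subspace $V$ of Proposition~\ref{prop:lin}.

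First I would apply Proposition~\ref{prop:lin}. Let $V\subseteq\Hermz(\C^d)$ be the span of differences of reachable states, so $\dim V\le d^2-1$ and $V$ is invariant under $L_0$ and $L_1$. By the proposition, a word $u$ synchronizes iff $L_u|_V=0$. Since $0^m$ synchronizes, $N:=L_0|_V$ satisfies $N^m=0$, so $N$ is nilpotent on $V$. Since $0^{m-1}$ is shorter and does not synchronize, $N^{m-1}\ne 0$. Thus the nilpotency index of $N$ is exactly $m$.

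Next I would use the standard fact that a nilpotent operator on a space of dimension $k$ has index at most $k$. The kernels $\ker N^j$ strictly increase until they stabilize, and they stabilize at $V$. Hence $m\le\dim V\le d^2-1$, which gives $d\ge\sqrt{m+1}$ and so $d\ge\lceil\sqrt{m+1}\,\rceil$. Together with the upper bound this yields equality.

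I expect no real obstacle here. Uniqueness of the shortest synchronizing word is not even needed; it suffices that no shorter word synchronizes. The one point to check is that "not synchronizing" really means $N^{m-1}\ne0$ on $V$ itself rather than on all of $\Hermz$. This is exactly what the "if and only if" in Proposition~\ref{prop:lin} provides. The invariance of $V$ is what makes $N=L_0|_V$ a well-defined operator on $V$, so that the nilpotency-index bound applies with $\dim V$ rather than $d^2-1$ directly; either bound suffices.
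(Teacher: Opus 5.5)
Your proof is correct, and your lower bound is the paper's argument. In both, $L_0|_V$ is nilpotent of index exactly $m$ on the invariant space $V$, so $m\le\dim V\le d^2-1$. You also spell out the kernel-chain reason the index is at most $\dim V$, which the paper takes as standard.

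The only difference is the upper bound. You cite Theorem~\ref{thm:kmp}, which is legitimate since it holds for every nonempty word. The paper instead builds a dedicated instance: $L_0$ is a single nilpotent Jordan block $J$ on an $m$-dimensional subspace and $L_1=\mathrm{id}$ there, so a word synchronizes iff it contains at least $m$ zeros. That construction is self-contained and shows directly that the nilpotent chain forced by the lower bound also suffices, but given Theorem~\ref{thm:kmp} it is logically redundant.

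Your route is shorter. The ``in particular'' clause, which you do not state explicitly, follows at once, since Theorem~\ref{thm:kmp} bounds every word of length $m$ by the same value $\lceil\sqrt{m+1}\,\rceil$.
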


\begin{proof}
\emph{Lower bound.} Suppose $0^m$ is the unique shortest synchronizing word of an
instance of dimension $d$, and let $M=L_0|_V$ in the notation of
Proposition~\ref{prop:lin}. Synchronization of $0^m$ gives $M^m=0$; if
$M^{m-1}=0$ then $0^{m-1}$ would synchronize, contradicting minimality. So $M$
is nilpotent of index exactly $m$ on $V$, forcing
$m\le\dim V\le d^2-1$, i.e.\ $d\ge\sqrt{m+1}$.

\emph{Upper bound.} Let $d^2-1\ge m$. Inside $\Hermz\cong\R^{d^2-1}$ fix a
subspace $\R^m$ with basis $e_1,\dots,e_m$ and let $J$ be the nilpotent map
$Je_1=0$, $Je_k=e_{k-1}$, extended by $0$; let $L_0=J$ and $L_1=\mathrm{id}$ on
$\R^m$, extended by $0$. Take the channels of Lemma~\ref{lem:depol} and
$\rho_0=\tfrac{I}{d}+\varepsilon e_m$. Then
$A_u(\rho_0)=\tfrac{I}{d}+\varepsilon^{|u|+1}J^{z(u)}e_m$, where $z(u)$ is the
number of $0$'s in $u$. For a word $v$, the images
$A_v(A_u(\rho_0))=\tfrac{I}{d}+\varepsilon^{|u|+|v|+1}J^{z(u)+z(v)}e_m$ must
agree over all $u$; taking $u$ with $z(u)\ge m$ gives the value $\tfrac{I}{d}$,
so constancy is equivalent to $J^{z(u)+z(v)}e_m=0$ for all $u$, and $u=\varepsilon$
(the empty word) shows this happens iff $z(v)\ge m$. The shortest such words
have length $m$, uniquely $0^m$.
\end{proof}

\section{A phase counter: \texorpdfstring{$\qc(01^n0)=2$}{qc(01\^{}n0)=2}}

We now show that a single qubit suffices for the words $01^n0$, uniformly in
$n$. On a qubit we use the Bloch parametrization
$\rho=\tfrac12(I+x\cdot\sigma)$, $x\in\R^3$, $|x|\le1$; a channel acts as an
affine map $x\mapsto Bx+c$ of the ball into itself.

\begin{theorem}[{\cite{slaks}}]\label{thm:qubit}
For every $n\ge 1$, $\qc(01^n0)=2$.
\end{theorem}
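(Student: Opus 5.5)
The plan is to invoke the lower bound $\qc\ge 2$ from the remark after Definition~\ref{def:qc}, and to prove the upper bound with an explicit qubit instance built from Lemma~\ref{lem:depol}. By Proposition~\ref{prop:lin}, it is enough to choose real-linear maps $L_0,L_1$ on $\Hermz(\C^2)\cong\R^3$ (Bloch coordinates $e_x,e_y,e_z$) and a start state, and then to check which products $L_v$ vanish on the difference space $V$. Lemma~\ref{lem:depol} rescales every $L_a$ by $\varepsilon$. Hence $L_v$ becomes $\varepsilon^{|v|}$ times the unscaled product, and this does not affect whether it vanishes. I will therefore ignore $\varepsilon$ when deciding mortality.

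\textbf{Construction.} Fix $n\ge1$ and put $\theta=\pi/(2n)$. Let $L_1=R_\theta$ be rotation by $\theta$ in the $xy$-plane, extended by $0$ on $e_z$. Let $L_0$ be the rank-one map $L_0x=\langle e_x,x\rangle e_x$, i.e.\ the projection onto $e_x$. Take the channels $A_0,A_1$ given by Lemma~\ref{lem:depol} and $\rho_0=\tfrac I2+\varepsilon e_x$. Then
\[
A_{1^k}(\rho_0)=\tfrac I2+\varepsilon^{k+1}R_\theta^k e_x .
\]
The differences of these states span the $xy$-plane: $e_x$ and $R_\theta e_x$ are independent because $0<\theta\le\pi/2$. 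All reachable states have the form $\tfrac I2+(\text{vector in the }xy\text{-plane})$, so $V=\operatorname{span}(e_x,e_y)$. By Proposition~\ref{prop:lin}, a word $v$ synchronizes iff $L_v$ vanishes on the $xy$-plane.

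\textbf{Mortality analysis.} Write $v=1^{i}\,0\,1^{k_1}\,0\cdots 0\,1^{k_r}\,0\,1^{j}$ with $r+1$ zeros. I split into cases by the number of zeros.
\begin{itemize}
\item If $v=1^k$ has no zeros, $L_v=R_\theta^k$ is invertible on the plane, so $v$ does not synchronize.
\item If $v$ has exactly one zero, $L_v x=\langle e_x,R_\theta^{i}x\rangle R_\theta^{j}e_x$. This is nonzero on the plane.
\item If $v$ has two or more zeros, composing the rank-one pieces gives
\[
L_v x=\Big(\prod_{s=1}^r \cos(k_s\theta)\Big)\langle e_x,R_\theta^{i}x\rangle\, R_\theta^{j}e_x .
\]
\end{itemize}
So $v$ synchronizes iff some gap satisfies $\cos(k_s\pi/2n)=0$, i.e.\ $k_s\equiv n \pmod{2n}$, so in particular $k_s\ge n$. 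Such a $v$ contains the factor $01^{k_s}0$, which has length at least $n+2$. Equality $|v|=n+2$ forces $k_s=n$, $i=j=0$ and no further letters, i.e.\ $v=01^n0$. Thus $01^n0$ is the unique shortest synchronizing word, and $\qc(01^n0)\le 2$.

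\textbf{Main obstacle.} The step needing most care is pinning down $V$ exactly. It must be large enough that words with a single $0$, or a single $1$-block, do not synchronize, so we need $V\supseteq\operatorname{span}(e_x,e_y)$. It must also be contained in the plane, so that the $e_z$ direction plays no role. The explicit reachable-state formula handles both requirements. The remaining point is the bookkeeping of the rank-one products: cyclic reordering of the inner products gives the cosine factors between consecutive zeros. Here I will check that a gap $k=0$ (a factor $00$) contributes $\cos 0=1$ and so never kills the product.
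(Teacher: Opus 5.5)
Your proof is correct. It follows the same linear-algebraic idea as the paper's proof but realizes the maps differently, and it verifies more.

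The paper, following Lakshmanan, uses exact, parameter-free channels. $A_0$ is dephasing with Kraus operators $\ketbra{0}{0},\ketbra{1}{1}$, and $A_1$ is conjugation by a real rotation through $\pi/(4n)$. It then checks directly that $A_0A_1^nA_0$ sends every density matrix to $\tfrac12 I$. In Bloch coordinates these are exactly your ingredients: a projection onto one axis ($z$), and a rotation by $\pi/(2n)$ in a plane containing that axis (about the $y$-axis). So your instance is essentially an $\varepsilon$-shrunk copy of theirs, produced by Lemma~\ref{lem:depol} in the same style as Theorems~\ref{thm:kmp} and~\ref{thm:zeros}.

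What each route buys:
\begin{itemize}
\item The paper's route gives a physically natural instance, a measurement-type channel plus a unitary, whose reachable states are not pinned near the maximally mixed state.
\item Your route gives complete positivity for free. More importantly, it gives a complete verification of the ``unique shortest'' requirement.
\end{itemize}

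The paper's written argument only shows that $01^n0$ synchronizes. It never fixes $\rho_0$, never determines $V$, and never excludes other synchronizing words of length at most $n+2$. The choice of $\rho_0$ matters: for a start state whose Bloch vector lies on the $y$-axis, the single letter $0$ already synchronizes the paper's instance. Your cosine-product formula shows that the synchronizing words are exactly those containing a factor $01^k0$ with $k\equiv n\pmod{2n}$, which fills this gap. It transfers verbatim to the paper's instance with $\rho_0=\ketbra{0}{0}$, where $V$ is the $xz$-plane.

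One small tightening. The differences of the states $A_{1^k}(\rho_0)$ carry different powers of $\varepsilon$, so independence of $e_x$ and $R_\theta e_x$ is not literally the reason these differences span the plane. It is cleaner to note that $A_0(\rho_0)-\rho_0=(\varepsilon^2-\varepsilon)e_x$ and $A_1(\rho_0)-\rho_0=\varepsilon^2R_\theta e_x-\varepsilon e_x$ both lie in $V$. Together these give $e_x,R_\theta e_x\in V$.
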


\begin{proof}

It is sufficient for us to proof the upper bound. Here we present the proof by Lakshmanan in \cite{slaks}.
Let $A_0$ be the quantum channel with Kraus operators 
\[
K_1=\begin{pmatrix}1&0\\0&0\end{pmatrix} 
\quad and 
\quad
K_2=\begin{pmatrix}0&0\\0&1\end{pmatrix}
\]
and $A_1$ be the rotation channel with Kraus operator
\[
U=\begin{pmatrix}\cos\theta&-\sin\theta\\ \sin\theta&\cos\theta\end{pmatrix} 
\]
where $\theta=\frac{\pi}{4n}$, we claim that the word $01^n0$ is synchronizing with synchronizing state $\frac{1}{2}I$.

Let \[\rho =\begin{pmatrix}a&b\\b^*&1-a\end{pmatrix}\]
be an arbitrary density matrix. We compute
\[A_0(\rho) =\begin{pmatrix}a&0\\0&1-a\end{pmatrix}.\]

Informally, one can see that the action of $A_0$ can be thought of as sending the off-diagonal entries to zero. Similarly, the matrix $A_1^n$ rotates by $\frac{\pi}{4}$.

For all $n$, with $\theta=\frac{\pi}{4n}$, $A^n_1$ 
\[A^n_1 =\begin{pmatrix}\cos\theta&-\sin\theta\\ \sin\theta&\cos\theta\end{pmatrix} = \begin{pmatrix}\frac{1}{\sqrt{2}}&\frac{1}{\sqrt{2}}\\ \frac{-1}{\sqrt{2}}&\frac{1}{\sqrt{2}}\end{pmatrix}\]
so,
\[A^n_1A_0(\rho) = \begin{pmatrix}\frac{1}{2}&\frac{2a-1}{2}\\ \frac{2a-1}{2}&\frac{1}{2}\end{pmatrix}\]

Once more, we apply $A_0$, and the resulting state $A_0A^n_1A_0(\rho)$ = $\frac{1}{2}I$. Since $\rho$ was an arbitrary density state reachable by the two operators $A_0$ and $A_1$, $\frac{1}{2}I$ is the synchronizing state with synchronizing word $01^n0$.
\end{proof}

\begin{corollary}[No quantum \v{C}ern\'y function]\label{cor:nocerny}
For each fixed $d\ge2$ there exist instances of dimension $d$ whose unique
shortest synchronizing words are of unbounded length. Consequently no bound on
reset length in terms of dimension alone --- no quantum analogue of the
\v{C}ern\'y function --- can exist.
\end{corollary}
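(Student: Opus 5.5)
The plan is to derive the corollary directly from Theorem~\ref{thm:qubit} for $d=2$, and then to lift it to every $d\ge2$ by embedding. For $d=2$, Theorem~\ref{thm:qubit} supplies, for each $n\ge1$, a qubit instance whose unique shortest synchronizing word is $01^n0$, which has length $n+2$. Letting $n\to\infty$ gives dimension-$2$ instances whose unique shortest synchronizing words have unbounded length.

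For general $d\ge2$, I will embed the qubit instance into dimension $d$ without changing which words synchronize. The cleanest route is the linearization of Proposition~\ref{prop:lin}: synchronization of $u$ means exactly that $L_u$ vanishes on the invariant difference space $V$. Concretely, take $\C^d=\C^2\oplus\C^{d-2}$ and extend each qubit channel $A_i$ to the channel
\[
\tilde A_i(X)=A_i(PXP)+\tr\big((I-P)X\big)\,\tfrac{I_2}{2}\oplus 0,
\]
where $P$ is the projector onto $\C^2$. This map is CPTP, since it is a sum of a CP map and a measure-and-prepare map, and together the two pieces preserve trace. Take $\tilde\rho_0=\rho_0\oplus0$. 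The reachable set of the new instance is then exactly the embedded copy of the old reachable set, and $\tilde A_u$ acts on it as $A_u$. Hence $\tilde A_u$ is constant on $\tilde R$ if and only if $A_u$ is constant on $R$. The set of synchronizing words is therefore unchanged, and $01^n0$ remains the unique shortest one.

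An alternative to the embedding is the direct route of Theorem~\ref{thm:kmp}'s construction via Lemma~\ref{lem:depol}. That route only realizes lengths up to $d^2-1$, however, so it does not by itself give unbounded length; the embedding is needed.

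The only step requiring care is verifying that the reachable set does not grow under the extension. This holds because every reachable state lies in the block $P(\cdot)P$ and the extra term is only triggered by weight outside $\C^2$. It is routine once checked on $\tilde\rho_0$ and then by induction on $|u|$.

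Finally, the ``consequently'' clause follows immediately. If some function $f(d)$ bounded the length of reset words in dimension $d$, it would in particular bound the length of every unique shortest synchronizing word. This contradicts the unbounded family $01^n0$ in a fixed dimension $d$.
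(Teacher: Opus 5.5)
Your proposal is correct and follows the paper's route: the paper gives no separate proof of the corollary, and it is read off from Theorem~\ref{thm:qubit}, whose qubit instances have unique shortest synchronizing word $01^n0$ of length $n+2$. Your explicit block embedding supplies the passage from $d=2$ to arbitrary $d\ge2$, which the paper leaves unstated. The embedding is CPTP and satisfies $\tilde A_u(\rho\oplus 0)=A_u(\rho)\oplus 0$, so the reachable set and the set of synchronizing words are unchanged. This step is genuinely needed, since $\qc$ is defined as a least dimension and does not by itself provide instances in larger $d$.
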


The mechanism deserves emphasis: the rotation angle $\theta$ is a
\emph{continuous parameter used as a counter}. It verifies ``exactly $n$ ones in
a row'' through a single orthogonality coincidence
$\langle u,R^nv\rangle=0$, at zero cost in dimension. This is precisely what is
unavailable to the constant word $0^m$, whose synchronization must occur at the
$m$-th application of one and the same map and not the $(m-1)$-st, forcing a
genuine nilpotent chain of length $m$ (Theorem~\ref{thm:zeros}). The counting
argument that makes random words expensive in the classical, finite setting
(there are only finitely many small automata) fails completely over a continuous
machine class.

\section{Pure targets: \texorpdfstring{$\qcp(01^n0)=3$}{qc\_pure(01\^{}n0)=3}}

The qubit construction above resets to the maximally mixed state --- physically,
to useless noise at the center of the Bloch ball. Requiring a \emph{pure} reset
state changes the answer, and the obstruction is a rigidity phenomenon special
to dimension $2$.

\begin{theorem}[No pure targets on a qubit]\label{thm:d2pure}
No instance of dimension $2$ has a unique shortest synchronizing word $w$ with
$|w|\ge2$ and pure target. Consequently
$\{w:\qcp(w)=2\}=\{0,1\}$.
\end{theorem}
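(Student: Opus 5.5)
The plan is to show that whenever a qubit word $w$ synchronizes to a pure state, its last letter alone already synchronizes. This uses only the strict convexity of the Bloch ball and the fact that channels are affine. Write $w=ua$ with $a\in\{0,1\}$ the last letter and $|u|=|w|-1\ge1$. Suppose $w$ is a unique shortest synchronizing word of a qubit instance $(2,A_0,A_1,\rho_0)$ with pure target $\psi$. Put $S=A_u(R)$, so that $A_a(\sigma)=\psi$ for every $\sigma\in S$.

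\emph{Step 1: $S$ has at least two points.} If $S$ were a single point, then $A_u$ would be constant on $R$. So the nonempty word $u$ would synchronize, and it is strictly shorter than $w$, which contradicts minimality. Hence there are distinct $\sigma_1,\sigma_2\in S$ with $A_a(\sigma_1)=A_a(\sigma_2)=\psi$.

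\emph{Step 2: an interior point maps to $\psi$.} In Bloch coordinates the states form the closed unit ball in $\R^3$, which is strictly convex. So the midpoint $\sigma=\tfrac12(\sigma_1+\sigma_2)$ of two distinct points of the ball has Bloch vector of norm $<1$, i.e.\ $\sigma$ has full rank. Since $A_a$ is affine on density matrices, $A_a(\sigma)=\psi$.

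\emph{Step 3: rigidity.} This is the key point. Let $\rho$ be an arbitrary state. Because $\sigma$ is interior, for small $t>0$ the point $\rho'=(\sigma-t\rho)/(1-t)$ is still a state, and $\sigma=t\rho+(1-t)\rho'$. Applying $A_a$ gives $\psi=tA_a(\rho)+(1-t)A_a(\rho')$. A pure state is an extreme point of the state space, so $A_a(\rho)=\psi$. Thus $A_a$ is the constant channel onto $\psi$. In particular the one-letter word $a$ synchronizes, contradicting that $w$ with $|w|\ge2$ is the unique shortest synchronizing word. I expect this step to be the only substantive one. It is where dimension $2$ enters: in higher dimensions the state space is not strictly convex, so the midpoint in Step 2 need not be full rank and the argument breaks down.

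For the consequence, we already know $\qcp(w)\ge\qc(w)\ge2$, and the first part shows $\qcp(w)\ge3$ whenever $|w|\ge2$. It remains to exhibit qubit instances for $w=0$ and $w=1$. For $w=0$, let $A_0$ be the reset channel $\rho\mapsto\ketbra{0}{0}$, let $A_1=\mathrm{id}$, and let $\rho_0=\ketbra{1}{1}$. Then $R=\{\ketbra{1}{1},\ketbra{0}{0}\}$, so the empty word and the word $1$ do not synchronize, while $0$ does, with pure target $\ketbra{0}{0}$. Swapping the letters handles $w=1$. Hence $\{w:\qcp(w)=2\}=\{0,1\}$.
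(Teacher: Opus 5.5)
Your proof is correct, and it reaches the conclusion by a somewhat different route than the paper. Step 1 and the witnesses for $w=0,1$ match the paper's structure. In fact your witnesses are more careful: you specify $\rho_0$ so that $|R|=2$, which the paper leaves implicit.

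The paper writes the last letter $C=A_c$ in Kraus form and splits into two cases.
\begin{itemize}
\item If some $\sigma\in S$ is mixed, hence full rank, then positivity of each summand $K_i\sigma K_i^\dagger$ forces $K_i=\ket{\varphi}\bra{\chi_i}$, so $C$ is constant.
\item If every element of $S$ is pure, two distinct pure states span $\C^2$. Each $K_i$ sends both rays into $\C\varphi$, which gives the same conclusion.
\end{itemize}

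You avoid the case split. Taking the midpoint of two distinct preimages produces a full-rank preimage of $\psi$ automatically, by strict convexity of the Bloch ball. You then replace the Kraus computation by the fact that an affine map sending an interior point to an extreme point is constant.

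What your route buys:
\begin{itemize}
\item It uses only that $A_a$ is an affine self-map of the state space, so it holds verbatim for merely positive trace-preserving maps. The paper's argument needs a Kraus decomposition, i.e.\ complete positivity.
\item It is a geometric proof of the paper's remark that a qubit's state space has no faces other than single pure states.
\end{itemize}
What the paper's route buys is explicit structure for the offending channel: $K_i=\ket{\varphi}\bra{\chi_i}$ with $\sum_i\ketbra{\chi_i}{\chi_i}=I$.

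One small correction to your commentary: dimension $2$ enters only in Step 2, through strict convexity. Step 3 holds in every dimension, since any channel sending a full-rank state to a pure state is constant. Your closing sentence in fact locates the breakdown correctly in Step 2.
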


\begin{proof}
Suppose $w=uc$ with $c\in\{0,1\}$ the last letter, $|w|\ge2$, and let
$C=A_c$ have Kraus operators $\{K_i\}$ and pure target
$\varphi$. Let $S=A_u(R)$. If $|S|=1$ then $u$ is a shorter synchronizing word,
contradiction; so $C$ maps at least two distinct states of $S$ to
$\ketbra{\varphi}{\varphi}$.

If some $\sigma\in S$ is mixed, then in dimension $2$ it has full rank. From
$\sum_iK_i\sigma K_i^\dagger=\ketbra{\varphi}{\varphi}$ and positivity of each
summand, every $K_i\sigma K_i^\dagger$ is proportional to
$\ketbra{\varphi}{\varphi}$, so the range of $K_i\sigma^{1/2}$, hence of $K_i$,
lies in $\C\varphi$: $K_i=\ket{\varphi}\bra{\chi_i}$ for some $\chi_i$. Trace
preservation gives $\sum_i\ketbra{\chi_i}{\chi_i}=I$, so
$C(\rho)=\ketbra{\varphi}{\varphi}$ for \emph{every} $\rho$: the single letter
$c$ synchronizes, contradicting $|w|\ge2$.

Otherwise all states of $S$ are pure, and $S$ contains two distinct pure states,
whose rays span $\C^2$. Each $K_i$ maps both rays into $\C\varphi$, hence maps
all of $\C^2$ into $\C\varphi$, and we conclude as before. In either case we
have a contradiction, proving the first claim.

For the second claim: the single-letter words are realizable with pure targets in
dimension $2$ (let $A_0$ be the constant channel onto a pure state and $A_1$ a
nontrivial unitary; then $R$ has at least two elements, the empty word does not
synchronize, $0$ does, and $1$ does not), and by the first claim no longer word
is.
\end{proof}

The escape hatch in dimension $d\ge3$ is that a non-constant channel can be
constant on a proper \emph{face} of the state space; a qubit's state space has
no faces except single pure states, which is exactly what the proof exploits.
The following construction uses this: one channel funnels an entire face onto a
basis state, and a coherent rotation steers all reachable states onto that face
at precisely the right moment.

\begin{theorem}\label{thm:qutrit}
For every $n\ge1$, $\qcp(01^n0)=3$. Moreover this is witnessed by an instance
with the following additional properties: the target is the basis state
$\ketbra{e_2}{e_2}$, and the word $01^n0$ synchronizes universally, i.e.\
$A_{01^n0}$ is the constant channel on \emph{all} $3\times3$ density matrices.
\end{theorem}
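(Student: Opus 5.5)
The lower bound is immediate from Theorem~\ref{thm:d2pure}: since $|01^n0|=n+2\ge 3$, no qubit instance has $01^n0$ as unique shortest synchronizing word with pure target, so $\qcp(01^n0)\ge 3$. The work is the upper bound, which I would prove with an explicit qutrit instance on the basis $e_0,e_1,e_2$. The plan is to combine a coherent rotation in the $e_0e_1$-plane that fixes $e_2$, used as a counter as in Theorem~\ref{thm:qubit}, with a \emph{measure-and-funnel} channel that keeps the $e_0$-population and dumps everything else onto $e_2$.

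\emph{The instance.} Let $\theta=\pi/(2n)$ and let $A_1(\rho)=U\rho U^\dagger$, where $U$ rotates by $\theta$ in the real $e_0e_1$-plane and fixes $e_2$. Let
\[
A_0(\rho)=\bra{e_0}\rho\ket{e_0}\,\ketbra{e_0}{e_0}+\big(1-\bra{e_0}\rho\ket{e_0}\big)\ketbra{e_2}{e_2}.
\]
This is a measure-and-prepare channel, with Kraus operators $\ketbra{e_0}{e_0}$, $\ketbra{e_2}{e_1}$, $\ketbra{e_2}{e_2}$, so it is CPTP. It is constant, equal to $\ketbra{e_2}{e_2}$, on the face of states supported on $\operatorname{span}\{e_1,e_2\}$. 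Take $\rho_0=\ketbra{e_0}{e_0}$.

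\emph{Universal synchronization.} For any $\rho$, the state $A_0(\rho)$ is a mixture $p\,e_0+(1-p)\,e_2$. Applying $U^n$ rotates by $\pi/2$, so it sends $e_0\mapsto \pm e_1$ and fixes $e_2$, giving the mixture $p\,e_1+(1-p)\,e_2$ with no $e_0$-population. The final $A_0$ then outputs $\ketbra{e_2}{e_2}$. Hence $A_{01^n0}$ is the constant channel onto a basis state, as claimed.

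\emph{Uniqueness among shortest words.} This step is the main obstacle. I would track the evolution via the invariant $e_0$-population.

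First, $\ketbra{e_2}{e_2}$ lies in $R$, since it equals $A_{01^n0}(\rho_0)$, and it is fixed by both channels. Hence a word $v$ synchronizes on $R$ iff $A_v(\sigma)=\ketbra{e_2}{e_2}$ for every $\sigma\in R$, i.e.\ iff $A_v$ drives the $e_0$-population of every reachable state to zero.

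The set $R$ contains the pure states $U^j e_0$ at angles $j\theta$ for every $j$. A word $v$ with no $0$ is unitary, so it is not constant. Write $v=1^{k_0}0\,1^{k_1}0\cdots 0\,1^{k_r}$. Starting from $U^je_0$, the first $0$ leaves $e_0$-weight $\cos^2((j+k_0)\theta)$, and this cannot vanish for all $j$: take $j=0$ and $j=1$, and note that $n\ge1$ makes consecutive angles differ by less than $\pi$ modulo $\pi$ in the relevant sense. After that first $0$, every state is a mixture of $e_0$ and $e_2$ at angle $0$. Each subsequent block $1^{k_i}0$ multiplies the $e_0$-weight by $\cos^2(k_i\theta)$, which is zero iff $k_i\equiv n\pmod{2n}$. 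A trailing block $1^{k_r}$ keeps any remaining $e_0$-component nonzero.

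Therefore $v$ synchronizes iff it contains a factor $0\,1^{k}\,0$ with $k\equiv n\pmod{2n}$. The shortest such word has length $n+2$ and is exactly $01^n0$.

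The care needed is in the bookkeeping: the first run $1^{k_0}$ acts on arbitrarily rotated reachable states, and I must check that no prefix-only trick, such as $1^{k}0$, kills all of them simultaneously. The two states $e_0$ and $Ue_0$ already witness this failure.
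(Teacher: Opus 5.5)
Your proof is correct, but it uses a different witness and a simpler uniqueness argument than the paper.

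\textbf{The paper's route.} In the paper's labeling $e_1,e_2,e_3$, its $A_0$ has Kraus operators $\ketbra{e_2}{e_1}$ and $\ketbra{e_3}{e_2}-\ketbra{e_2}{e_3}$. Populations flow $1\to2$, $2\to3$, $3\to2$, and a $(2,3)$ coherence survives, so the target $\ketbra{e_2}{e_2}$ is fixed by neither letter. The paper therefore argues by separation. It parametrizes the states produced by $A_0$ by a scalar $y$. A parity argument shows that $\rho_0$ and $A_1(\rho_0)$ give distinct values of $y$ after the initial block $1^s0$. Injectivity of the block maps $y\mapsto(1-y)\cos^2(j\theta)$, valid when $\cos(j\theta)\neq0$, then keeps the two trajectories apart.

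\textbf{Your route.} Your measure-and-prepare $A_0$ makes $\ketbra{e_2}{e_2}$ a state of $R$ that is fixed by both letters. This forces it to be the only possible target, and synchronization becomes the vanishing of a single weight under the multiplicative updates $p\mapsto p\cos^2(k\theta)$. One reachable state with nonzero weight after the first block then suffices, and the parity and injectivity steps disappear. The paper's separation argument is more portable, since it does not need an absorbing target. Yours is shorter and characterizes all synchronizing words directly.

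\textbf{A slip to repair.} The reformulation ``iff $A_v$ drives the $e_0$-population of every reachable state to zero'' is false for words ending in $1$. The word $v=01^n$ sends every reachable state to $p\ketbra{e_1}{e_1}+(1-p)\ketbra{e_2}{e_2}$, which has zero $e_0$-population. Yet $A_{01^n}(\rho_0)=\ketbra{e_1}{e_1}\neq\ketbra{e_2}{e_2}=A_{01^n}(\ketbra{e_2}{e_2})$. For the same reason, a trailing block $1^{k_r}$ with $k_r\equiv n\pmod{2n}$ does not keep the $e_0$-component nonzero; it rotates it onto $e_1$.

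What your argument actually needs is that $U^{k_r}$ is injective and fixes $\ketbra{e_2}{e_2}$, so a trailing block can neither create nor destroy equality with the target. Equivalently, track $1-\bra{e_2}\sigma\ket{e_2}$ throughout: $U$ preserves it, and it equals the $e_0$-population on every output of $A_0$. The paper avoids the issue by reducing to words ending in $0$ at the outset, via invertibility of $A_1$.

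Also replace the vague ``consecutive angles'' remark with the exact criterion: $\cos(k\theta)=0$ iff $k\equiv n\pmod{2n}$. Two consecutive integers $k_0$ and $k_0+1$ cannot both satisfy this, because $2n\ge2$.
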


\begin{proof}
The lower bound $\qcp(01^n0)\ge3$ is Theorem~\ref{thm:d2pure} since
$|01^n0|=n+2\ge3\ge 2$.

For the upper bound, work on $\C^3$ with basis $e_1,e_2,e_3$ and
$\rho_0=\ketbra{e_1}{e_1}$. Let $A_0$ have the two Kraus operators
\[
K_1=\begin{pmatrix}0&0&0\\1&0&0\\0&0&0\end{pmatrix}=\ketbra{e_2}{e_1},
\qquad
K_2=\begin{pmatrix}0&0&0\\0&0&-1\\0&1&0\end{pmatrix}
   =\ketbra{e_3}{e_2}-\ketbra{e_2}{e_3},
\]
so that $K_1^\dagger K_1+K_2^\dagger K_2=I$ and, explicitly,
$A_0(\rho)$ is supported on $\operatorname{span}(e_2,e_3)$ with entries
\[
A_0(\rho)_{22}=\rho_{11}+\rho_{33},\qquad
A_0(\rho)_{33}=\rho_{22},\qquad
A_0(\rho)_{23}=-\rho_{32}.
\]
Populations flow $1\to2$, $2\to3$, $3\to2$; in particular $A_0$ is constant,
equal to $\ketbra{e_2}{e_2}$, on the face of states supported on
$\operatorname{span}(e_1,e_3)$.
Let $A_1(\rho)=U\rho U^\dagger$ with
\[
U=\begin{pmatrix}\cos\theta&-\sin\theta&0\\ \sin\theta&\cos\theta&0\\ 0&0&1\end{pmatrix},
\qquad \theta=\frac{\pi}{2n},
\]
a rotation in the $(e_1,e_2)$-plane fixing $e_3$.

\emph{$01^n0$ synchronizes universally.} $U^n$ is the quarter rotation
$e_1\mapsto e_2$, $e_2\mapsto-e_1$. For arbitrary $\rho$: after $A_0$ the state
$\sigma$ is supported on $(e_2,e_3)$ with $\sigma_{22}=\rho_{11}+\rho_{33}$ and
$\sigma_{33}=\rho_{22}$; after $U^n$ its support moves to $(e_1,e_3)$; the final
$A_0$ is constant on that face, and indeed places
$\sigma_{22}+\sigma_{33}=\tr\rho=1$ in entry $(2,2)$ with all other entries
zero. Thus $A_{01^n0}(\rho)=\ketbra{e_2}{e_2}$ for every $\rho$, a pure basis
state.

\emph{Reachable set.} Starting from $e_1$, a routine induction shows every
reachable state has the form
\[
\rho(p,\psi)\;=\;p\,\ketbra{\psi}{\psi}+(1-p)\,\ketbra{e_3}{e_3},
\qquad \psi\in\operatorname{span}(e_1,e_2),\ \|\psi\|=1,\ p\in[0,1]:
\]
$A_1$ rotates $\psi$ and fixes the rest, while $A_0$ maps $\rho(p,\psi)$ to
$\rho\big(1-y,\,e_2\big)$ rotated into the $(e_2,e_3)$ frame; precisely, writing
$y=p\,|\langle e_2|\psi\rangle|^2$, one gets
$A_0(\rho(p,\psi))=(1-y)\ketbra{e_2}{e_2}+y\,\ketbra{e_3}{e_3}$,
with no $(2,3)$ coherence since $\rho(p,\psi)_{32}=0$.

\emph{Minimality and uniqueness.} We must show no word of length at most $n+2$
other than $01^n0$ synchronizes on $R$. Since $A_1$ is invertible, $A_{v1}$ is
constant on $R$ iff $A_v$ is; so it suffices to treat words ending in $0$
(words in $1^*$ are injective on $R$, and $|R|\ge2$ because
$e_1,e_2\in R$ --- via the empty word and the word $0$ --- so they do not
synchronize). Write such a word as
$v=1^{s}0\,1^{j_1}0\cdots1^{j_{k-1}}0$ with $k\ge1$ zeros.

First, the states entering the ``diagonal family'' after the initial block
$1^s0$ take at least two distinct values of the parameter $y$ above. Indeed
$R$ contains $\rho_0=\ketbra{e_1}{e_1}$ and $A_1(\rho_0)=\ketbra{Ue_1}{Ue_1}$,
whose $y$-values after $1^s0$ are $\sin^2(s\theta)$ and $\sin^2((s+1)\theta)$
respectively; these are equal only if $(s+1)\theta\equiv\pm s\theta\pmod\pi$,
i.e.\ $\theta\equiv0$ or $(2s+1)\theta\equiv0\pmod\pi$. The former is false, and
the latter would give $2n\mid 2s+1$, impossible by parity. So two distinct
values $y\ne y'$ enter.

Second, each subsequent block $1^{j}0$ acts on the parameter by the affine map
\[
y\;\longmapsto\;(1-y)\cos^2(j\theta),
\]
as one checks directly from the formulas above (rotate the $e_2$-component by
$U^j$, then apply $A_0$). This map is injective unless $\cos(j\theta)=0$, i.e.\
unless $j\equiv n\pmod{2n}$. Hence if no internal run $j_i$ is congruent to $n$
modulo $2n$, the two trajectories remain distinct through the final $0$, after
which the state is determined by $y$; so $v$ does not synchronize. If some
$j_i\equiv n\pmod{2n}$ then $v$ contains the factor $01^{j_i}0$ with $j_i\ge n$,
so $|v|\ge n+2$, with equality only for $v=01^n0$. (When such a factor is
present, the state after it equals $\ketbra{e_2}{e_2}$ for every input, so $v$
does synchronize --- consistently with the universal computation above.)

Therefore $01^n0$ is the unique shortest synchronizing word, its target
$\ketbra{e_2}{e_2}$ is pure, and $\qcp(01^n0)\le3$.
\end{proof}

\begin{corollary}[Purity costs one dimension]\label{cor:sep}
For every $n\ge1$,
\[
\qc(01^n0)=2 \;<\; 3=\qcp(01^n0)\;\le\;\lceil\sqrt{n+3}\,\rceil \text{ for } n\ge7.
\]
In particular the pure-target variant also admits no \v{C}ern\'y function: fixed
dimension $3$ supports unique shortest synchronizing words of unbounded length
with pure (indeed basis-state) targets.
\end{corollary}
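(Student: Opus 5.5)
This corollary just assembles results already proved. Nothing new needs to be constructed; the only point needing care is the arithmetic range in the final inequality.

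The plan has three steps.

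\textbf{The strict inequality.} I would quote Theorem~\ref{thm:qubit} for $\qc(01^n0)=2$. I would quote Theorem~\ref{thm:qutrit} for $\qcp(01^n0)=3$; that theorem already contains the lower bound via Theorem~\ref{thm:d2pure}, since $|01^n0|=n+2\ge 3$. Together these give $2<3$ for every $n\ge1$.

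\textbf{The upper estimate.} I would compare $3$ with $\lceil\sqrt{|w|+1}\,\rceil=\lceil\sqrt{n+3}\,\rceil$, the general bound of Theorem~\ref{thm:kmp}. The inequality $3\le\lceil\sqrt{n+3}\,\rceil$ holds exactly when $\sqrt{n+3}>2$, that is, when $n\ge 2$. So the stated range $n\ge7$ is certainly enough. At $n\ge 7$ one has $\sqrt{n+3}\ge\sqrt{10}>3$, so the ceiling is at least $4$ and the inequality is strict. I would remark that the hypothesis is only there to put the bound in a visible regime. The step to watch is whether the statement intends a comparison with Theorem~\ref{thm:kmp}'s bound for $\qc$ rather than any claimed general upper bound on $\qcp$. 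No such general bound is proved: Remark~\ref{rem:classical} gives only $\qcp(w)\le|w|+1$. I would therefore read the displayed inequality as a numerical fact about the explicit value $3$, not as an instance of a theorem.

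\textbf{The absence of a \v{C}ern\'y function in the pure-target setting.} Theorem~\ref{thm:qutrit} gives, for every $n$, a qutrit instance whose unique shortest synchronizing word $01^n0$ has length $n+2$ and whose target is the basis state $\ketbra{e_2}{e_2}$. These lengths are unbounded in the fixed dimension $3$, which mirrors Corollary~\ref{cor:nocerny}.
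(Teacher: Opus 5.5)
Your proposal is correct and matches how the paper obtains this corollary, which it states without a separate proof. Theorem~\ref{thm:qubit} and Theorem~\ref{thm:qutrit} (with its lower bound from Theorem~\ref{thm:d2pure}) give $2<3$, and the qutrit instances give unique shortest synchronizing words of unbounded length $n+2$ with basis-state target in fixed dimension $3$. Your arithmetic and reading of the last inequality are also right: $3\le\lceil\sqrt{n+3}\,\rceil$ holds already for $n\ge2$ and is strict precisely for $n\ge7$, and it is only a numerical comparison with the $\qc$ bound of Theorem~\ref{thm:kmp}, since no general $O(\sqrt{|w|})$ bound on $\qcp$ is proved.
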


\begin{remark}
The classical shadow of the qutrit instance --- the action of $A_0$ on diagonal
states --- is the deterministic map $1\to2$, $2\to3$, $3\to2$, which can never
synchronize on its own since states $2,3$ cycle. It is the coherent rotation
$U^n$, transferring the population of level $2$ back to level $1$
($\sin^2(n\theta)=1$) before the funnel closes, that makes the construction
work; the instance is irreducibly quantum rather than a disguised automaton.
\end{remark}

\section{Discussion and open problems}

\subsection*{What does $\qc$ measure?}
Theorems~\ref{thm:zeros} and~\ref{thm:qubit} exhibit the central, perhaps
counterintuitive, feature of $\qc$: the intuitively simplest word $0^m$ is
extremal, $\qc(0^m)=\lceil\sqrt{m+1}\,\rceil$, while the patterned words
$01^n0$ cost nothing, $\qc(01^n0)=2$. A small instance is a short description of
its unique shortest synchronizing word, so $\qc$ is a description-size measure
--- but over a \emph{continuous} machine class, where real parameters encode
integers for free (the angle $\theta$ of Theorem~\ref{thm:qubit} is a unary
counter of zero dimensional cost). What survives as expensive is exact
\emph{nilpotency depth}: a pure power $0^m$ forces synchronization to occur at
the $m$-th application of one and the same map and not the $(m-1)$-st, and an
index-$m$ nilpotent needs $m$ dimensions --- the one resource continuous
parameters cannot shortcut. In this sense $\qc$ is anti-correlated with
Kolmogorov-style intuition, much as, elsewhere in automata theory, recognizing
$a^m$ exactly is the expensive task while intricate-looking words can be cheap.
It follows that $\qc$ is wildly non-monotone in length and under factors:
$\qc(0^8)=3$ while $\qc(01^80)=2$.

\subsection*{Fragility}
Both counter constructions live on codimension-one coincidences
($\langle u,R^nv\rangle=0$; $\cos(n\theta)=0$): under a generic perturbation of
the parameters, \emph{no} word synchronizes exactly. An approximate version of
the definition ($\|A_w(\rho)-\rho_1\|\le\epsilon$ on $R$) would therefore behave
very differently and deserves its own study. In the qutrit instance of
Theorem~\ref{thm:qutrit}, the face-collapse mechanism (the funnel
$1,3\to2$) is robust; only the steering step $U^n$ is a measure-zero
coincidence. One may regard this robustness asymmetry, alongside the purity of
the target, as a reason to prefer $\qcp$-style definitions.

\subsection*{Open problems}
\begin{enumerate}
\item (Thue--Morse prefix.) We conjecture $\qc(01101001)=2$; the bound
$\qc(01101001)\le3$ follows from Theorem~\ref{thm:kmp}. More generally,
characterize $\{w:\qc(w)=2\}$. By Theorem~\ref{thm:d2pure} the analogous
pure-target class is trivial, $\{0,1\}$, so the interest is genuinely in the
mixed-target case. A dimension count in the Bloch picture (both letters must act
by singular Bloch maps when $w$ begins and ends with distinct letters, and
collapse is governed by kernel--image incidences) suggests the class is large,
but we do not know whether, e.g., all words with at least one alternation and no
long constant power belong to it.
\item (Complexity of computing $\qc$.) Remark~\ref{rem:computable} gives
computability via real quantifier elimination, which is far from efficient. Is
$\qc(w)$ computable in time polynomial in $|w|$? Is the set
$\{(w,k):\qc(w)\le k\}$ in \textbf{NP}, or $\exists\R$?
\item (Typical words.) What is $\qc(w)$ for a uniformly random $w$ of length
$m$? The classical counting argument gives $\rc(w)\gtrsim m/(2\log m)$ for most
$w$, but it is consistent with our results that $\qc(w)=2$ for almost all $w$.
\item (Extremal words.) Is $0^m$ (with its reversal and complementations) the
\emph{unique} maximizer of $\qc$ among words of length $m$, for all large $m$?
\item (Pure-target savings.) Theorem~\ref{thm:qutrit} beats the classical bound
$\qcp(w)\le|w|+1$ of Remark~\ref{rem:classical} dramatically on the family
$01^n0$. Does a general quadratic-type saving
$\qcp(w)\le O(\sqrt{|w|})$ hold, i.e.\ a pure-target analogue of
Theorem~\ref{thm:kmp}? Our proof of Theorem~\ref{thm:kmp} places the target at
the maximally mixed state and does not adapt.
\item (Robust complexity.) Develop the $\epsilon$-approximate theory: with
eigenvalue-decay mechanisms available, which words remain expensive, and is
there a robust analogue of Corollary~\ref{cor:nocerny}?
\end{enumerate}

\section*{Acknowledgments}
The paper was produced with the assistance of Claude Fable 5.1 (Anthropic).
This work was supported by a grant from the Simons Foundation (\#4508914 to Bj\o rn Kjos-Hanssen). A Lean 4 formalization of our results is available at \url{https://github.com/bjoernkjoshanssen/quantumcerny}.

\end{document}